\newtheorem{theorem}{Theorem}
\newtheorem{lemma}{Lemma}
\newtheorem{proposition}{Proposition}
\title{Wireless Networks of Bounded Capacity}
\author{Grace Villacr\'es and Tobias Koch}
\date{}
\begin{document}

\onecolumn

\maketitle

\begin{abstract}
We consider a noncoherent wireless network, where the transmitters and receivers are cognizant of the statistics of the fading coefficients, but are ignorant of their realizations. We demonstrate that if the nodes do not cooperate, if they transmit symbols that all follow the same distribution, and if the variances of the fading coefficients decay exponentially or more slowly, then the channel capacity is bounded in the SNR. This confirms a similar result by Lozano, Heath, and Andrews.

\renewcommand{\thefootnote}{}
\footnote{The authors are with the Signal Theory and Communications Department, Universidad Carlos III de Madrid, 28911, Legan\'es, Spain and with the Gregorio Mara\~n\'on Health Research Institute (e-mail: \texttt{\{gvillacres,koch\}@tsc.uc3m.es}).\\
This work has been partly supported by the Ministerio de Economía of Spain (projects ’COMONSENS’, id. CSD2008-00010 and ’OTOSiS’, id. TEC2013-41718-R), the Comunidad de Madrid (project ’CASI-CAM- CM’, id. S2013/ICE-2845) and the European Community's Seventh Framework Programme (FP7/2007-2013) under Grant Nr. 333680.}
\end{abstract}
\setcounter{footnote}{0}

\section{Introduction}
\IEEEPARstart{T}{he} information-theoretical limits of wireless networks have been studied extensively in the past and it has been demonstrated that cooperation between nodes in wireless networks can substantially increase the channel capacity. Indeed, theoretical findings suggest that throughput gains of up to two orders of magnitude are obtainable by exploiting techniques such as Coordinated Multi-Point (CoMP), network Multiple Input Multiple Output (MIMO) or Interference Alignment (IA), provided that the communicating nodes have access to the fading coefficients in the network \cite{simeone_local_2009,ozgur_hierarchical_2007,cadambe_interference_2008}. Knowledge of these fading coefficients is usually referred to as \emph{channel-state information (CSI)}, and the assumption that the nodes have perfect knowledge of the fading coefficients is usually referred to as \emph{perfect CSI}. However, it is \emph{prima facie} unclear whether perfect CSI can be obtained in practical systems. In fact, over-the-air trials have only demonstrated disappointingly low throughput improvements (not exceeding 30\% in some scenarios) \cite{irmer_coordinated_2011}.

The purpose of this work is to investigate the channel capacity of wireless networks when the nodes neither have perfect CSI nor do they perform a channel estimation to obtain information on the fading coefficients. This is relevant, e.g., if the nodes do not have the computational resources to perform an accurate channel estimation. Moreover, in dense wireless networks, where many nodes share the same resources, an accurate channel estimation may not be feasible, even if the nodes have sufficient computing power.

Our work is along the lines of the work by Lozano, Heath, and Andrews \cite{lozano_fundamental_2013}. Indeed, Lozano et al.\ argued that cooperation can have severe limitations in the sense that the channel capacity for wireless networks can be bounded in the signal-to-noise (SNR). In other words, cooperation cannot convert an interference-limited network into a noise-limited one.  The main results in \cite{lozano_fundamental_2013} are  based on the analysis of a block-fading channel that models the channel within a cluster and takes out-of-cluster interference into account. Specifically, in \cite{lozano_fundamental_2013} two analyses are carried out:
\begin{itemize}
\item The analysis of a block-fading channel in a clustered system with pilot-assisted channel estimation and out-of-cluster interference that has a Gaussian distribution and grows linearly with the SNR. Due to this dependence of the interference on the SNR, the signal-to-interference-plus-noise ratio (SINR) is bounded in the SNR, resulting in the capacity being bounded in the SNR, too.
\item The analysis of a fully cooperative system, where all transmitters and all receivers cooperate, resulting effectively in MIMO transmission. It is assumed that the number of transmitters is greater than the number of time instants $L$ over which the block-fading channel stays constant. This precludes an accurate channel estimation. For this scenario, the maximum achievable rate is studied when the time-$k$ channel input is of the form $\sqrt{\mathsf{\SNR}}X_k$ (where the distribution of $X_k$ does not depend on the SNR), and it is demonstrated that this rate is bounded in the SNR.
\end{itemize}
For both analyses the conclusions are the same: in the absence of perfect CSI, the capacity of wireless networks is bounded in the SNR, hence, transmission over such networks is highly power-inefficient. This accentuates the limitation of cooperation in wireless networks.

However, the analyses by Lozano et al.\ are based on simplifying assumptions that weaken their conclusions. Indeed, in the former analysis of a clustered system with pilot-assisted channel estimation, the authors assume that the interference has a Gaussian distribution, which seems pessimistic. Indeed, it is well-known that the capacity of the additive-noise channel is minimized, among all power-constrained noise distributions, by \emph{identically and independently distributed (i.i.d.)} Gaussian noise \cite{blachman1957communication}, \cite{dobrushin1959optimum}. This result has been generalized to wireless networks \cite{shomorony_worst-case_2013}. Note, however, that Gaussian noise does not constitute the worst-case scenario when the input distribution is not Gaussian; see, e.g., \cite{shamai1992worst}. Based on the above observations, it may seem plausible that also for the considered channel model, Gaussian interference is the worst-case interference. Since the distribution of the interference depends on the distribution of the symbols transmitted by the interfering nodes, one may argue that the capacity may be unbounded if the interferers would use codebooks that give rise to non-Gaussian interference. 

As for the latter analysis of a fully cooperative system, here the authors assume that the time-$k$ channel inputs are of the form $\sqrt{\SNR} X_k$, where $X_k$ has a distribution that does not depend on the SNR. One may argue that the corresponding information rate is bounded in the SNR because of the suboptimal input distribution and not because of the limitations of cooperation. In fact, it has been demonstrated by Lapidoth and Moser in \cite[Th. 4.3]{lapidoth2003capacity} that for a memoryless channel and in the absence of perfect CSI such inputs give rise to a bounded information rate also in the point-to-point case. For noncoherent point-to-point memoryless fading channels, an input distribution that changes with the SNR is thus necessary in order to achieve an unbounded information rate.\footnote{However, in contrast to the case of perfect CSI, in its absence the capacity only grows double-logarithmically with the SNR \cite[Th. 4.2]{lapidoth2003capacity}.} Since the block-fading channel specializes to the memoryless fading channel when $L=1$, the observation that channel inputs of the form $\sqrt{\SNR}X_k$ yield a bounded information rate is perhaps not very surprising.

In this work, we explore whether the capacity of noncoherent wireless networks is still bounded in the SNR if we allow the input distribution to depend on the SNR. In contrast to the analysis by Lozano et al. \cite{lozano_fundamental_2013}, we assume for the sake of simplicity that the nodes do not cooperate. We do not know whether cooperation would give rise to an unbounded capacity. For simplicity, we further consider a flat-fading channel with an infinite number of interferers. The locations of these interferers enter the channel model through the variance of the fading coefficients corresponding to the paths between the interferers and the intended receiver. Without loss of generality, we order the interferers with respect to the variances of the corresponding fading coefficients: the fading coefficient of the first interferer has the largest variance, denoted by $\alpha_1$, the fading coefficient of the second interferer has the second-largest variance, denoted by $\alpha_2$, and so on. We consider a noncoherent scenario where transmitter and receiver are cognizant of the statistics of the fading coefficients, but are ignorant of their realization. We demonstrate that the observation by Lozano et al.\ continues to hold even if the input distribution is allowed to depend on the SNR, provided that the sequence of variances $\{\alpha_{\ell}\}$ decays at most exponentially and each interfering node transmits symbols that follow the same distribution as the symbols transmitted by the transmitting node. To this end, we derive an upper bound on the channel capacity that is bounded in the SNR.
 
The rest of this paper is organized as follows. Section II describes the channel model. Section III is devoted to channel capacity and the main result of our work. Section IV shows an upper bound on the channel capacity for the special case in which the variances of the channel coefficients decay exponentially. Section V contains the proof of our main result. Section VI concludes the paper with a summary and discussion of the results.

\section{Channel Model}\label{ChannelModel}
\begin{figure}[h]
\centering
\begin{subfigure}[b]{0.45\linewidth}
\centering
\includegraphics[width=0.65\textwidth]{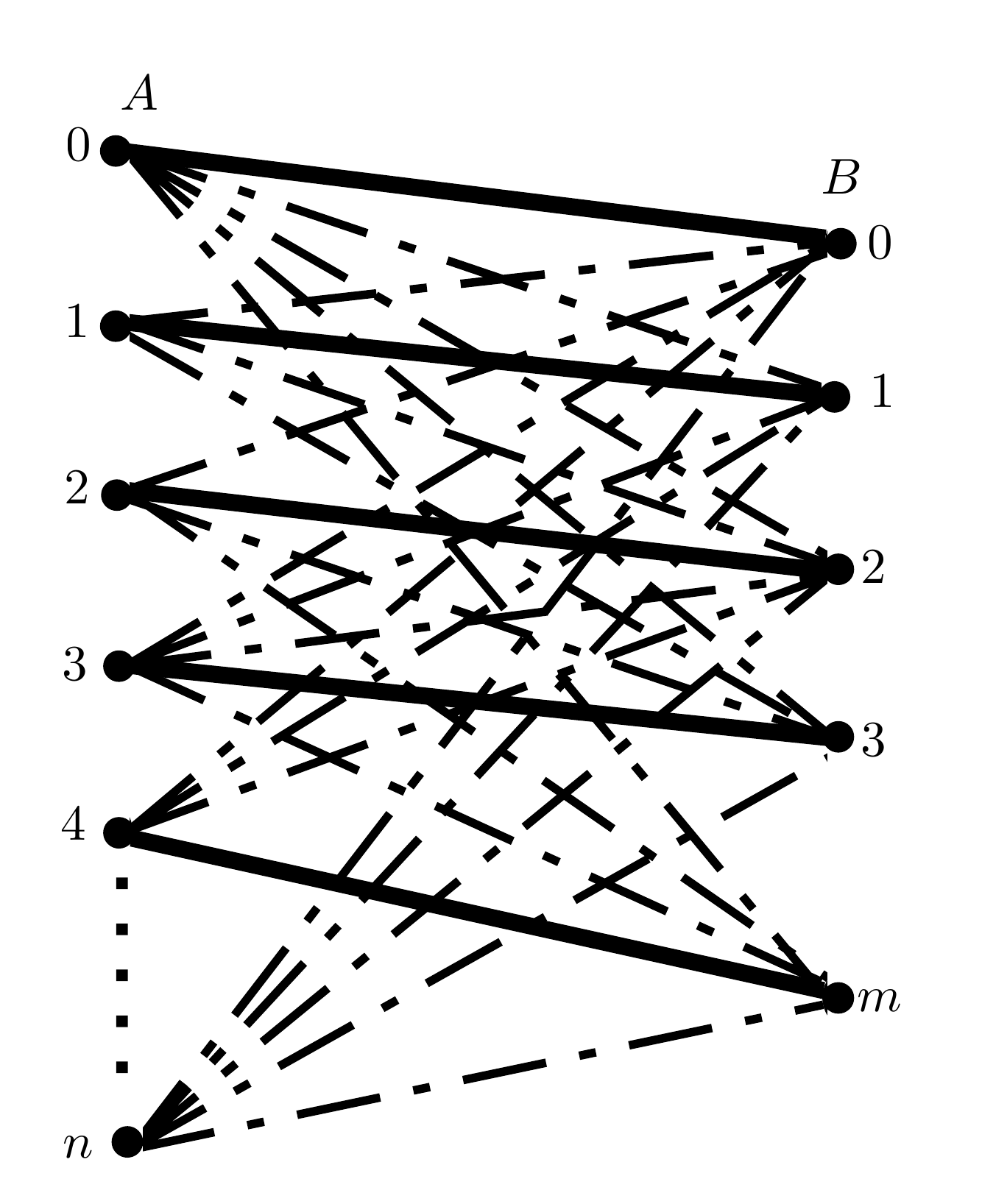}
\caption{Schema of a wireless network} \label{fig:Network_1}
\end{subfigure}
\begin{subfigure}[b]{0.45\linewidth}
\centering
\includegraphics[width=0.55\textwidth]{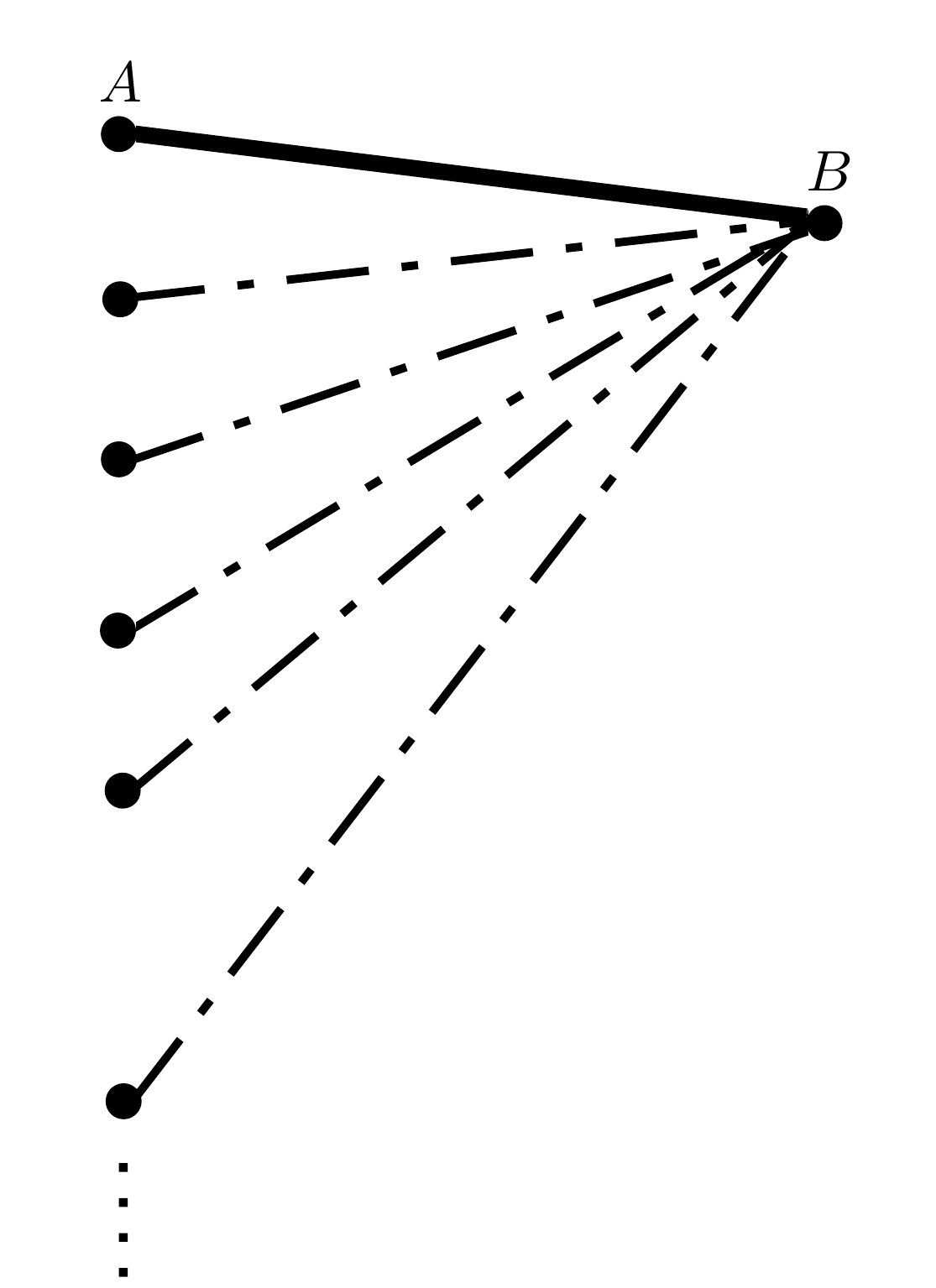}
\caption{Schema of a reduced wireless network} \label{fig:Network_2}
\end{subfigure}
\par\bigskip
\begin{subfigure}[b]{0.45\linewidth}
\centering
\includegraphics[width=0.8\textwidth]{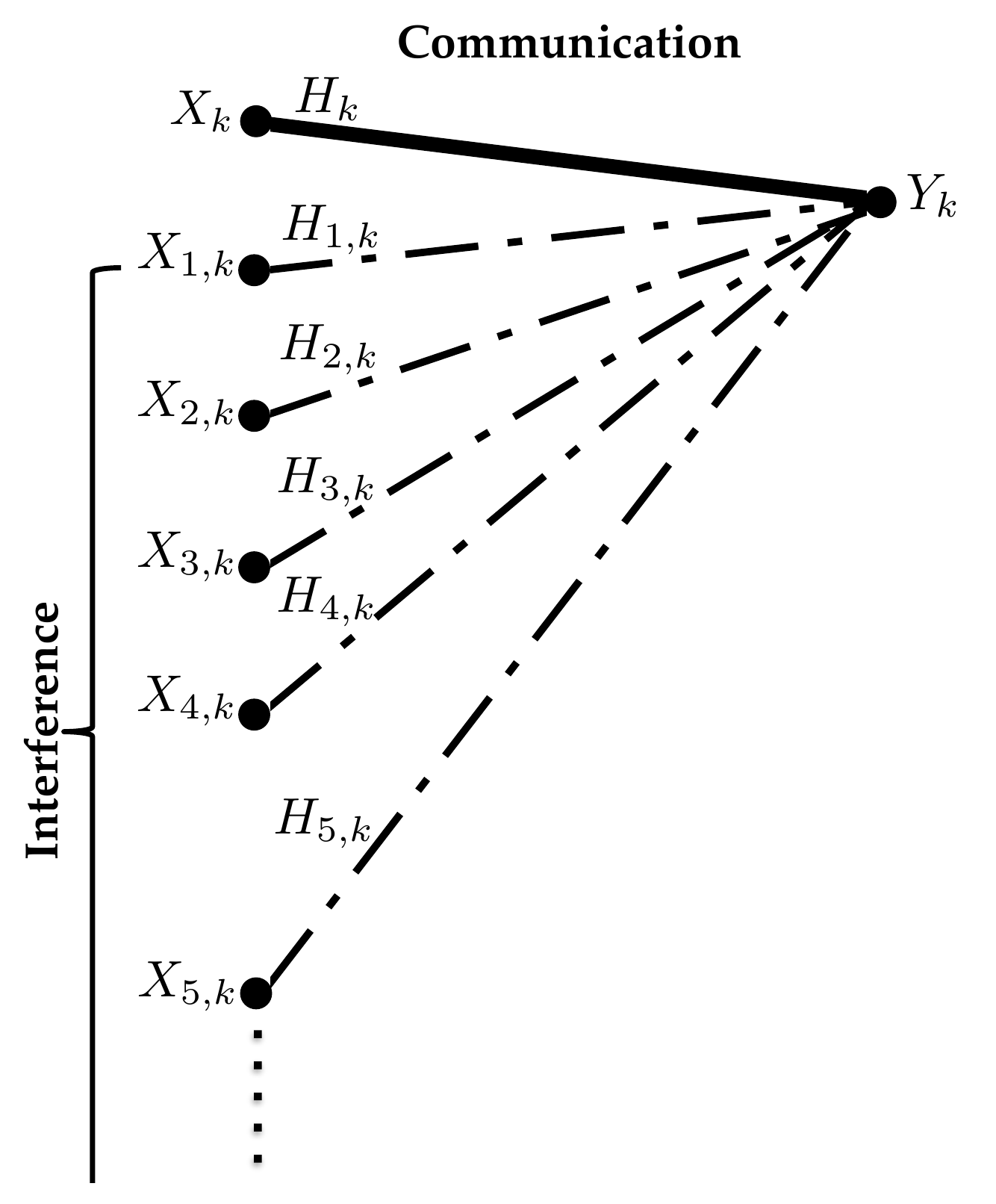}
\caption{Channel model} \label{fig:OurChannel}
\end{subfigure}
\caption{Wireless Networks}\label{fig:Networks}
\end{figure}

A network is composed of a number of users that are communicating with each other. A possible scenario is depicted in Figure~\ref{fig:Network_1}, where $n$ transmitting nodes (denoted in the figure by ``A") communicate with $m$ receiving nodes (denoted by ``B"). This network is sometimes referred to as an \emph{interference network}. For simplicity, we assume that the set of transmitting nodes and the set of receiving nodes are disjoint and both the transmitting nodes and the receiving nodes do not cooperate.

Since a characterization of all achievable rates in the network is unfeasible when the number of nodes is large, it is common to study the \emph{sum-rate capacity} of the network, which is defined as the supremum of the sum of all achievable rates in the network. However, it is \emph{prima facie} unclear whether the transmission strategy that achieves the sum-rate capacity will also be optimal in practice. Indeed, it may well be that the optimal transmission strategy consists of turning off all but one of the transmitting nodes, thereby minimizing the interference. Such a strategy prevents $n-1$ nodes from transmitting anything and is probably not very practical. In fact, practical constraints may demand that each node is offered roughly the same transmission rate or, at least, a transmission rate that is above a given threshold. In order to enforce such a solution, one could study the sum-rate capacity of the network under the constraint that all transmitting nodes transmit at the same rate, but obtaining an expression for such a capacity seems challenging. To model the problem in a way that it can be analyzed analytically, we simplify the original problem as follows:

Firstly, we consider the case where one transmitting node communicates with one receiving node and the interfering nodes emit symbols that interfer with this communication; see Figure (\ref{fig:Network_2}). To model a large network, we assume that there are an infinite number of interfering nodes. As performance measure we consider the capacity of the channel between the transmitting and receiving node. Secondly, to avoid transmission strategies for which the interfering nodes are turned off (which would, in fact, maximize the capacity), we assume that the symbols transmitted by the interfering nodes must follow the same distribution as the symbols transmitted by the transmitting node. By the channel coding theorem \cite{Cover}, this implies that each node (transmitting and interfering) is transmitting at the same rate, while at the same time it keeps the analysis accessible.

Note that, while the above simplifications permit a mathematical analysis of the channel capacity of the network, they are more restrictive than the analysis of the sum-rate capacity under the constraint that all nodes transmit at the same rate. Indeed, one can find transmission strategies for which each node transmits at the same rate, but the symbols produced by them do not follow the same distribution. One such strategy would be time-division multiple access (TDMA), where only one pair of nodes communicates with each other and the other nodes remain silent, and where the pairs of communicating nodes change after a given period until all nodes have transmitted their data. Clearly, in this case a node that transmits data produces symbols that follow a different distribution than the symbols emitted by silent nodes (for which the symbol transmitted at time $k$ is equal to zero). Thus, our constraint that all nodes shall produce symbols that follow the same distribution precludes, e.g., strategies such as TDMA. That said, the rates achievable with TDMA vanish as the number of users in the network tends to infinity, so in a large network TDMA seems not very attractive anyway.

We model the channel between the transmitting and receiving node by a discrete-time flat-fading channel whose complex-valued output  $Y_k$ at time $k\in\mathbb{Z}$ (where $\mathbb{Z}$ denotes the set of integers) corresponding to the time-$k$ channel input $X_k$ and the time-$k$ interfering symbols $X_{\ell,k}$, $\ell=1,2,\ldots$ is given by
\begin{equation}\label{ChModel}
Y_k=H_kX_k+\sum_{\ell=1}^{\infty}{H}_{\ell,k}{X}_{\ell,k} + Z_k, \quad k\in\mathbb{Z}.
\end{equation}
Here $Z_k$ models the time-$k$ additive noise; $H_k$ denotes the time-$k$ fading coefficient of the channel between the transmitter and receiver; and $H_{\ell,k}$, $\ell=1,2,\ldots$ denotes the time-$k$ fading coefficient of the link between the $\ell$-th interfering node and the receiver; see Figure (\ref{fig:OurChannel}) . We assume that the sequences $\{Z_k,\,k\in\Integers\}$, $\{H_k,\,k\in\Integers\}$, and $\{{H}_{\ell,k},\,k\in\Integers\}$, $\ell=1,2,\ldots$ are all sequences of i.i.d. complex random variables that they are independent of each other. We further assume that $Z_k\sim\mathcal{N_{\mathbb{C}}}(0,\sigma^2)$, $H_k \sim\mathcal{N_{\mathbb{C}}}(0,1)$ and ${H}_{\ell,k} \sim\mathcal{N_{\mathbb{C}}}(0,\alpha_\ell)$ for some $\alpha_{\ell}>0$, where we use the notation $U\sim\mathcal{N}_{\Complex}(\mu,\sigma^2)$ to indicate that $U$ is a circularly-symmetric, complex Gaussian random variable of mean $\mu$ and variance $\sigma^2$ ($\mathbb{C}$ denotes the set of complex numbers). We consider a noncoherent scenario where transmitter and receiver are cognizant of the statistics of the fading coefficients, but are ignorant of their realization.

We assume that the interfering nodes do neither cooperate with each other nor with the transmitting node. This implies that the sequences $\{X_k,\,k\in\Integers\}$ and $\{{X}_{\ell,k},\,k\in\Integers\}$, $\ell=1,2,\ldots$ are independent of each other. Since we also demand that each node produces symbols according to the same distribution, this implies that $\{X_k,\,k\in\Integers\}$ and $\{{X}_{\ell,k},\,k\in\Integers\}$, $\ell=1,2,\ldots$ are also identically distributed. Note, however, that this does not mean that the sequences of transmitted symbols are i.i.d.\ in time---to compute the capacity we will optimize over all distributions of $\{X_k,\,k\in\Integers\}$ that satisfy the power constraint.

Without loss of generality, we assume that the interfering nodes are ordered according to the variances of the corresponding fading coefficients, i.e., $\alpha_{\ell}\geq\alpha_{\ell'}$ for any $\ell<\ell'$.
We further assume that there exists a $0<\rho<1$ such that
\begin{equation}\label{EtaBounds_2}
\frac{\alpha_{\ell+1}}{\alpha_{\ell}}\geq \rho, \quad \ell=1,2,\ldots 
\end{equation}
The condition \eqref{EtaBounds_2} states that the variances of the fading coefficients corresponding to the interfering nodes decay at most exponentially. If we relate these variances to the path losses between the receiving node and the interfering nodes, then \eqref{EtaBounds_2} essentially requires that the distance between neighboring interfering nodes grows at most exponentially as we move farther away from the receiving node. We believe that this assumption is reasonably mild.

\section{Channel Capacity and Main Result}
We shall denote sequences such as $A_n, A_{n+1}, A_{n+2}, \cdots, A_m$ by $A_n^m$. We further denote by $\varlimsup$ the limit superior. We define the capacity of the above channel \eqref{ChModel} as\footnote{The logarithms used in this paper are natural logarithms. The capacity has thus the dimension ``nats per channel use''.}
\begin{equation}\label{Capacity-red}
C(\SNR)\triangleq \varlimsup_{n\rightarrow\infty}\frac{1}{n} \sup_{Q^n}I({X}_{1}^n;Y_1^n)
\end{equation}
where we assume that the sequences $X_1^n$ and $X_{\ell,1}^n$, $\ell=1,2,\ldots$ are independent of each other and that each such sequence has distribution $Q^n$. This is consistent with the simplifying assumptions introduced in Section~\ref{ChannelModel}. The supremum in \eqref{Capacity-red} is over all $n$-dimensional, complex-valued, probability distributions $Q^n$ satisfying the power constraint
\begin{eqnarray}\label{PowerC}
\int\frac{1}{n}\sum_{k=1}^{n}{|x_k|}^2dQ^n(x_1^n)\leq\mathsf{P}.
\end{eqnarray}
The $\mathrm{SNR}$ is defined as
\begin{equation}
\SNR \triangleq \frac{\mathsf{P}}{\sigma^2}.
\end{equation}
We do not claim that there is a coding theorem associated with \eqref{Capacity-red}, i.e., we do not claim that for any rate below $C(\SNR)$ there exists an encoding and decoding scheme for which the decoding error probability tends to zero as $n$ tends to infinity. (See \cite{Cover} for more details.) However, by Fano's inequality \cite[Sec. 7.9]{Cover}, we do know that for any rate above $C(\SNR)$ there exists no such encoding and decoding scheme. Thus, by demonstrating that $C(\SNR)$ is bounded in the $\SNR$, we also demonstrate that there exists no encoding and decoding scheme that has a rate that tends to infinity as $\SNR\to\infty$ and for which the decoding error probability vanishes as $n$ tends to infinity.

Our main result is the following.

\begin{theorem}[Main Result]
\label{MainTheorem}
For the channel model presented in Section~\ref{ChannelModel}, the capacity is bounded in the SNR, i.e.,
\begin{eqnarray}
\sup_{\SNR>0}C(\SNR)< \infty.
\end{eqnarray}
\end{theorem}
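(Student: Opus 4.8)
The plan is to first strip the problem down to magnitudes. Since $H_k$, the $H_{\ell,k}$, and $Z_k$ are mutually independent and circularly symmetric, conditioned on the signal power and all interfering powers the output $Y_k$ is a zero-mean, circularly-symmetric complex Gaussian of variance $V_k=|X_k|^2+\sum_{\ell}\alpha_\ell|X_{\ell,k}|^2+\sigma^2$; its phase is then uniform and independent of the input, so it conveys nothing. I would use this to reduce \eqref{Capacity-red} to $I(X_1^n;Y_1^n)=I(T_1^n;S_1^n)$, where $T_k=|X_k|^2$ and $S_k=|Y_k|^2$, and where, given all powers, the $S_k$ are conditionally independent and exponentially distributed with means $V_k$. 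The task becomes bounding the information that a multiplicatively-noisy observation of the sum $V_k$ carries about the single summand $T_k$, the decisive structural fact being that the interfering powers $|X_{\ell,k}|^2$ are i.i.d.\ copies of $T_k$ (scaled by $\alpha_\ell$), by the common-distribution assumption of Section~\ref{ChannelModel}.

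A first upper bound is immediate: subadditivity and the maximum-entropy property of the exponential give $h(S_1^n)\le\sum_k(1+\log\mathbb{E}[S_k])$, while conditioning on all powers gives $h(S_1^n\mid T_1^n)\ge\sum_k(1+\mathbb{E}[\log V_k])$, so that $I(T_1^n;S_1^n)\le\sum_k\big(\log\mathbb{E}[V_k]-\mathbb{E}[\log V_k]\big)$. This ``log-gap'' is the right object for spread inputs, and I would lower-bound $\mathbb{E}[\log V_k]$ through the self-similar interference. The catch, and the reason the theorem is not routine, is that this bound is hopelessly loose for peaky inputs: for on--off signaling with a vanishing duty cycle the true mutual information is at most the input entropy and tends to $0$, whereas $\log\mathbb{E}[V_k]-\mathbb{E}[\log V_k]$ grows like $\log\SNR$. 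Any successful bound must be tight simultaneously for spread and for peaky inputs.

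The mechanism I would exploit is that the signal is statistically indistinguishable from the interferers living at its own power scale, so it can be credited with only a bounded share of the received information however the input spends its power. Condition \eqref{EtaBounds_2} is exactly what makes this hold at every scale: from $\alpha_{\ell+1}\ge\rho\,\alpha_\ell$ one gets $\alpha_\ell\ge\alpha_1\rho^{\ell-1}$, whence the number of interferers with variance above any level $c$ grows like $\log(1/c)$, and no power range the input can occupy is left without a comparable interferer. Technically I would realize this through the auxiliary-distribution (``golden formula'') bound $\tfrac1n I(T_1^n;S_1^n)\le\tfrac1n\sum_k\mathbb{E}\big[D\big(P_{S_k\mid\mathrm{past}}\,\big\|\,R\big)\big]$ with a single fixed, scale-invariant reference law $R$ for $|Y|^2$ (heavy-tailed on the logarithmic scale); the all-scales, self-similar interference should keep the true output law within a bounded divergence of such an $R$ uniformly over the input, yielding a per-letter constant depending only on $\rho$ (and $\alpha_1$), not on $\mathsf{P}$ or $\sigma^2$. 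Normalizing by $n$, taking the supremum over $Q^n$ in \eqref{PowerC}, and passing to the $\varlimsup$ then gives $\sup_{\SNR>0}C(\SNR)<\infty$.

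The hard part is exactly this uniform control over all admissible inputs, including those of unbounded dynamic range: the log-gap bound diverges on peaky inputs while the entropy bound diverges on continuous spread inputs, so the two regimes must be reconciled inside one estimate, and it is here that the at-most-exponential decay \eqref{EtaBounds_2} is indispensable---super-exponential decay would leave high power levels unguarded by any comparable interferer and should restore the unbounded (doubly-logarithmic) growth. I would expect the pure exponential case of Section IV to be cleaner, since the geometric weights make the interference an independent, $\rho$-scaled copy of the whole received-power structure and so invite a recursive bound, with the general case of Section V handled by a banded, scale-by-scale version of the same argument.
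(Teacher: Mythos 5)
Your reduction to magnitudes is valid, and your structural diagnosis is exactly right: the signal can only be charged for what distinguishes it from interferers at its own power scale, condition \eqref{EtaBounds_2} guarantees a comparable interferer at every scale, and in the geometric case \eqref{AlphaRho} the interference is a $\rho$-scaled copy of the whole output structure---this last observation is literally the paper's Section~\ref{ExponentialPL}, where (\ref{barY_knew}) gives $C(\mathrm{SNR})\le\log(1/\rho)$ in a few lines, cf.\ (\ref{CRho}). The gap is in the technical device you propose to implement this intuition. The one-sided duality bound $I\le\sum_k\mathbb{E}\bigl[D\bigl(P_{S_k\mid\cdot}\,\big\|\,R\bigr)\bigr]$ with a \emph{single, fixed, SNR-independent} reference law $R$ cannot yield an SNR-independent constant. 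First, an exactly scale-invariant density on $(0,\infty)$ is not normalizable, so any admissible $R$ privileges some scale. Second, and decisively, for every fixed proper $R$ the per-letter divergence diverges with the SNR: conditioned on all inputs, $S_k$ is exponential with mean $V_k$, so $\log S_k$ concentrates in an $O(1)$ window around $\log V_k$, and as $\mathsf{P}\to\infty$ this window drifts to infinity while $R$ assigns it vanishing mass, forcing $\mathbb{E}[D(\cdot\|R)]\to\infty$ (at rate $\log\log\mathrm{SNR}$ for a log-heavy-tailed $R$---this is precisely how the Lapidoth--Moser double-logarithmic growth emerges from duality, not how it is defeated). The failure is visible on constant-amplitude inputs $|X_k|=\sqrt{\mathsf{P}}$: the true mutual information about $T_1^n$ is zero, yet your bound is unbounded, because the one-sided bound charges the receiver for information about the overall power scale, which it shares with the interference. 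Boundedness of $C$ requires this common-scale contribution to be \emph{cancelled}, not merely dominated, and no choice of fixed $R$ produces the cancellation.

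The missing idea is the cancellation device, and it is where the common-distribution assumption does its real work. The paper never compares the output to an external reference: starting from $I(X_1^n;Y_1^n)\le h(Y_1^n)-h(Y_1^n-H_1^nX_1^n)$ in (\ref{MInf}), it observes that, because every node uses the same $Q^n$, the signal-removed output $Y_1^n-H_1^nX_1^n$ has the law of $\tilde{Y}_1^n$, the output of the \emph{same} network with the interferer index shifted by one (Section~\ref{Proof}). The difference $h(Y_1^n)-h(\tilde{Y}_1^n)$ is then rewritten as $h(Y_1^n|\tilde{Y}_1^n)-h(\tilde{Y}_1^n|Y_1^n)$, the output is factored as $Y_k=\tilde{Y}_k\cdot(Y_k/\tilde{Y}_k)$, and in (\ref{MInf_4}) the scale-carrying terms $\mathbb{E}\bigl[\log\mathbb{E}\bigl[|\tilde{Y}_k|^2\,\big|\,\{X_{\ell,k}\}\bigr]\bigr]$ from (\ref{UB_1}) and (\ref{UB_3}) cancel exactly, leaving only quantities controlled by the conditional variance ratio $\eta^2(\{x_{\ell,k}\})$ of (\ref{eta_eval}), which Lemma~\ref{Lemma} squeezes into $[\eta_{\min}^2,\eta_{\max}^2]$ using precisely \eqref{EtaBounds_2}. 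If you wanted to salvage a duality-style argument, the reference would have to be the interference-shifted output law itself, i.e., you would be bounding the discrepancy between $P_{Y_1^n}$ and $P_{\tilde{Y}_1^n}$---at which point you have reconstructed the paper's proof rather than an alternative to it. Your closing intuition about a ``banded, scale-by-scale'' extension is the right picture of \emph{why} $\eta^2$ is bounded, but as stated your argument stops exactly at the step the theorem turns on: a uniform estimate that you flag as ``the hard part'' and do not supply.
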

\begin{proof} See Section~\ref{Proof}.
\end{proof}

Thus, Theorem~\ref{MainTheorem} demonstrates that if the nodes do not cooperate, if they emit symbols that all follow the same distribution, and if the variances of the fading coefficients satisfy \eqref{EtaBounds_2}, then the capacity is bounded in the SNR, even if the distribution of the channel inputs is allowed to depend on the SNR. This result is more general than the result by Lozano et al.\cite{lozano_fundamental_2013} in the sense that we allow the distribution of the channel inputs to depend on the SNR, but it is less general than \cite{lozano_fundamental_2013} in the sense that we assume that the variances of the fading coefficients, i.e., $\{\alpha_{\ell}\}$, decay at most exponentially and that we do not allow cooperation between the nodes.

The condition \eqref{EtaBounds_2} is satisfied, for example, if
\begin{equation}
\label{AlphaRho}
\alpha_{\ell} = \rho^{\ell}, \quad \text{for some $0<\rho<1$.}
\end{equation}
In general, we can say that if $\{\alpha_{\ell}\}$ decays \emph{exponentially or more slowly}, then $C(\SNR)$ is bounded in the SNR. This result is reminiscent of a result obtained by Koch and Lapidoth \cite{koch_multipath_2010} that states that the capacity of single-user, frequency-selective, noncoherent fading channels is bounded in the SNR if the variances of the path gains decay exponentially or more slowly. Note, however, that the proof given in Section~\ref{Proof} is different from the one given in \cite{koch_multipath_2010}.

In order to prove the above result, we derive an upper bound on the capacity that does not depend on the SNR. Since this bound may also be of independent interest, we summarize it in the following proposition.

\begin{proposition}[Upper Bound] Consider the channel model introduced in Section~\ref{ChannelModel}. The capacity is upper-bounded by 
\begin{equation}\label{Proposition}
C(\SNR)\leq \log\frac{\pi }{2}+2\left(\frac{\eta_{\max}^2}{\eta_{\min}^2}-\frac{1}{2}\right)+\log \eta_{\max}^2+\frac{1}{2}\log\frac{\eta_{\max}^2}{\eta_{\min}^2},\quad \SNR>0.
\end{equation}
where $\eta^2_{\min}$ and $\eta^2_{\max}$ are defined as
\begin{eqnarray}
\eta_{\min}^2&\stackrel{\Delta}{=}&\min\left(1,\frac{\alpha_0}{\alpha_1}\right)\\
\eta_{\max}^2&\stackrel{\Delta}{=}&\max\left(\frac{\alpha_0}{\alpha_1}, \frac{1}{\rho}\right).
\end{eqnarray}
\end{proposition}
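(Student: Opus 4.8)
The plan is to pass to a per-symbol quantity and exploit that, conditioned on all the time-$k$ symbols, the output is Gaussian. Indeed, given $X_k$ and $\{X_{\ell,k}\}_{\ell\ge1}$, the coefficient $Y_k$ is $\mathcal{N}_{\mathbb{C}}(0,V_k)$ with
\[ V_k \;=\; |X_k|^2+\sum_{\ell=1}^{\infty}\alpha_\ell|X_{\ell,k}|^2+\sigma^2, \]
and, given all symbols, the $Y_k$ are independent across time. Using the subadditivity of differential entropy, $h(Y_1^n)\le\sum_k h(Y_k)$, the capacity is then governed by single-letter quantities of the form $h(Y_k)-h(Y_k\mid X_k)$. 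By circular symmetry only $|X_k|^2$ is informative, so I would work with the received power $|Y_k|^2=V_kE_k$, where $E_k$ is a unit-mean exponential independent of the symbols, and try to bound $I(|X_k|^2;|Y_k|^2)$ by a constant that is uniform over all symbol laws (the power constraint plays no role here).

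The heart of the argument is a self-similarity of $V_k$ that is available precisely because every node draws its symbols from the same distribution. Set $\alpha_0=1$ and $X_{0,k}=X_k$, so that $V_k=\sigma^2+\sum_{\ell\ge0}\alpha_\ell|X_{\ell,k}|^2$, and let $U_k=\sigma^2+\sum_{\ell\ge1}\alpha_\ell|X_{\ell,k}|^2$ be the interference-plus-noise power. Relabelling the i.i.d.\ symbols shows that the index-shifted quantity
\[ U_k'\;=\;\sigma^2+\sum_{\ell\ge0}\alpha_{\ell+1}|X_{\ell,k}|^2 \]
has the same law as $U_k$; note that $U_k'$ feeds the transmitted symbol $X_k$ into the leading interference slot. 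Because \eqref{EtaBounds_2} together with $\alpha_0=1$ gives $\eta_{\min}^2\le\alpha_\ell/\alpha_{\ell+1}\le\eta_{\max}^2$ for every $\ell\ge0$, a term-by-term comparison yields
\[ \eta_{\min}^2\,U_k'\;\le\;V_k\;\le\;\eta_{\max}^2\,U_k',\qquad U_k'\stackrel{d}{=}U_k. \]
In words: whatever dynamic range the transmitter adopts, the interference reproduces it, so the received power can never depart from a statistical copy of the pure-interference power by more than the bounded factors $\eta_{\min}^2,\eta_{\max}^2$.

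To convert the sandwich into \eqref{Proposition}, I would invoke the auxiliary-output (duality) bound $I(|X_k|^2;|Y_k|^2)\le\mathbb{E}\,[D(P_{|Y_k|^2\mid X_k}\,\|\,R)]$ with a reference $R$ modelled on the signal-free output whose power is $\eta_{\min}^2U_k'$. Both conditional and reference laws are exponential (equivalently complex-Gaussian) scale mixtures over the interferers, and the governing per-realization discrepancy is the variance-mismatch divergence
\[ \frac{V_k}{\eta_{\min}^2U_k'}-1-\log\frac{V_k}{\eta_{\min}^2U_k'}. \]
By the sandwich $\eta_{\min}^2U_k'\le V_k\le\eta_{\max}^2U_k'$ the ratio $V_k/(\eta_{\min}^2U_k')$ lies in $[1,\eta_{\max}^2/\eta_{\min}^2]$, so this divergence is at most of order $\eta_{\max}^2/\eta_{\min}^2$ plus a term of order $\log(\eta_{\max}^2/\eta_{\min}^2)$; the remaining additive constants $\log\eta_{\max}^2$ and $\log(\pi/2)$ come from the change of variables between the complex output and its magnitude. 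Assembling these contributions yields the four terms on the right-hand side of \eqref{Proposition}, none of which depends on $\SNR$.

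The step I expect to be the crux is the legitimacy of comparing against the recycled reference $\eta_{\min}^2U_k'$, which secretly contains the transmitted symbol in its leading term. One cannot simply replace it by a distribution fixed in advance: against a fixed auxiliary law, ``flash'' inputs of unbounded dynamic range make $\mathbb{E}[|X_k|^2]\,\mathbb{E}[1/U_k]$---and hence the bound---diverge, whereas actually revealing the interference turns the link into a noncoherent fading channel whose information grows like $\log\log\SNR$. The whole point is that the interference is \emph{hidden} yet equidistributed with the signal, and turning this into a valid information inequality is delicate; I would attempt it through the joint convexity of relative entropy with the interferers as a common mixing variable, or, more transparently, in the logarithmic domain, where the sandwich becomes an additive perturbation confined to an interval of length $\log(\eta_{\max}^2/\eta_{\min}^2)$. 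A secondary technicality is temporal memory: for a general $Q^n$ the output entropies do not factorize, so the first reduction must rest on $h(Y_1^n)\le\sum_kh(Y_k)$ together with a lower bound on $h(Y_1^n\mid X_1^n)$ that does not disclose the interference.
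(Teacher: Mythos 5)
Your preliminary reductions are sound and in fact reproduce the paper's key lemma: with $\alpha_0=1$ and $X_{0,k}=X_k$, the shifted power $U_k'=\sigma^2+\sum_{\ell\ge 0}\alpha_{\ell+1}|X_{\ell,k}|^2$ is equidistributed with the interference-plus-noise power, and your term-by-term sandwich $\eta_{\min}^2U_k'\le V_k\le\eta_{\max}^2U_k'$ is exactly Lemma~\ref{Lemma} of the paper, since $\eta^2(\{x_{\ell,k}\})=V_k/U_k'$. But the central step of your argument is not established, and you say so yourself: the duality bound $I\le\mathbb{E}\bigl[D(P_{Y_k\mid X}\,\|\,R)\bigr]$ is valid only for a reference law $R$ that is \emph{fixed}, whereas your $R$ is built from $\eta_{\min}^2U_k'$, a random variable that is a function of the very inputs being estimated (its leading term is $\alpha_1|X_k|^2$). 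Neither of your proposed repairs closes this. Joint convexity of relative entropy, $D(\sum_i\lambda_iP_i\,\|\,\sum_i\lambda_iQ_i)\le\sum_i\lambda_iD(P_i\,\|\,Q_i)$, requires mixing \emph{both} arguments over the common variable, and the resulting left-hand side is a divergence between two mixtures, not the mutual information (which pairs a mixture in the first argument with a fixed marginal in the second); there is no general inequality permitting an input-dependent reference. The log-domain remark---that the sandwich is an additive perturbation confined to an interval of length $\log(\eta_{\max}^2/\eta_{\min}^2)$---merely restates the sandwich and is not an information inequality. Your single-letterization is likewise incomplete: $h(Y_1^n)\le\sum_kh(Y_k)$ handles one term, but you never supply the needed lower bound on $h(Y_1^n\mid X_1^n)$; revealing the interferers is, as you correctly note, fatal (it yields double-logarithmic growth), and the alternative is left unspecified.

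The paper closes precisely this gap with a coupling device absent from your sketch. It first lower-bounds $h(Y_1^n\mid X_1^n)\ge h(Y_1^n\mid H_1^n,X_1^n)=h(Y_1^n-H_1^nX_1^n)$, revealing only the \emph{direct} fading, and then uses the equidistribution of the nodes' symbols to identify $Y_1^n-H_1^nX_1^n$ in law with the output $\tilde{Y}_1^n$ of the auxiliary channel \eqref{^Y_k}, driven by the \emph{same} inputs but with independent fading and noise. The difference $h(Y_1^n)-h(\tilde{Y}_1^n)$ is then rewritten via the identity $h(A)-h(B)=h(A\mid B)-h(B\mid A)$ and single-letterized, and all input dependence is funneled into the ratio $Y_k/\tilde{Y}_k$: conditioned on the time-$k$ symbols this is a ratio of independent Rayleigh variables whose law depends on the inputs \emph{only} through $\eta^2(\{x_{\ell,k}\})\in[\eta_{\min}^2,\eta_{\max}^2]$. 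That is what makes the per-letter bound uniform over all input laws, flash signalling included, and after the explicit Rayleigh-ratio computations of the appendices (in particular $\mathbb{E}\bigl[\,|Y_k/\tilde{Y}_k|\,\bigr]=\tfrac{\pi}{2}\,\mathbb{E}[\eta(\{X_{\ell,k}\})]$, whence the $\log\tfrac{\pi}{2}$ term) it yields exactly the constants in \eqref{Proposition} via \eqref{K}. Your diagnosis of the obstruction is accurate and the sandwich lemma is correct, but naming the obstruction is not overcoming it; without a concrete, valid substitute for the paper's coupled auxiliary output, the proof does not go through.
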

\begin{proof} Follows directly from the upper bound derived in the proof of Theorem~\ref{MainTheorem}; cf.\ (\ref{BoundedC}) and (\ref{K}).
\end{proof}
Assuming that $\{\alpha_{\ell}\}$ satisfies \eqref{AlphaRho}, we have $\eta_{\min}^2=1$ and $\eta_{\max}^2=\frac{1}{\rho}$. In this, the capacity is thus upper-bounded by
\begin{equation}\label{GeomDecay}
C(\SNR)\leq \log\frac{\pi }{2}+2\left(\frac{1}{\rho}-\frac{1}{2}\right)+\frac{3}{2}\log \frac{1}{\rho},\quad \SNR>0.
\end{equation}

\section{Exponential path loss}\label{ExponentialPL}
In the special case where the variances of the fading coefficients satisfy (\ref{AlphaRho}), i.e., where they decay exponentially, we obtain an upper bound that is tighter than the upper bound in (\ref{GeomDecay}). Indeed, it can be shown that in this case the capacity is upper-bounded by
\begin{eqnarray}\label{CRho}
C(\SNR)\leq\frac{1}{\rho}, \quad \SNR>0.
\end{eqnarray}
As we shall see, the proof of (\ref{CRho}) is much simpler than the proof of Theorem~\ref{MainTheorem}. 

To obtain (\ref{CRho}), we begin by deriving  an upper bound on the mutual information $I(X_1^n;Y_1^n)$ as follows. Since conditioning reduces entropy \cite[Th. 8.6.1]{Cover}, we have
\begin{IEEEeqnarray}{rCl}\label{MInf1}
I({X}_{1}^n;Y_1^n)&=& h(Y_1^n)-h(Y_1^n|X_1^n)\nonumber\\
&\leq& h(Y_1^{n})-h(Y_1^n|H_1^n,X_{1}^{n})\nonumber \\
&=& h(Y_1^n)-h(Y_1^n-{H}_{1}^n{X}_{1}^n) \IEEEeqnarraynumspace
\end{IEEEeqnarray}
where $Y_1^n-H_1^nX_1^n$ refers to sequence of symbols $Y_1-H_1 X_1,\ldots,Y_n-H_nX_n$. In the last step of (\ref{MInf1}), we use that the symbols emitted by the nodes are independent, so $X_1^n$ and $X_{\ell,1}^n$, $\ell=1,2,\ldots$ are independent.

We next define a new channel model, for which the channel output  $\bar{Y}_k$ at time $k$ $\in$ $\mathbb{Z}$ corresponding to the time-$k$ channel inputs ${X}_{\ell,1},\ldots,{X}_{\ell,k}$ is given by
\begin{eqnarray}\label{barY_k}
\bar{Y}_k&=&\sum_{\ell=0}^{\infty}{H}_{\ell+1,k}{X}_{\ell,k}+{Z}_k,\quad k \in\mathbb{Z}.
\end{eqnarray}
Observe that $(H_{\ell+1,1},\ldots,H_{\ell+1,n})$ can be obtained from $(H_{\ell,1},\ldots,H_{\ell,n})$ by changing $\ell$ to $\ell+1$. It thus follows from (\ref{AlphaRho}) and the independence of the fading processes that $(H_{\ell+1,1},\ldots,H_{\ell+1,n})$, $\ell=0,1,\ldots$ has the same joint distribution as $(\sqrt{\rho}H_{\ell,1},\ldots,\sqrt{\rho}H_{\ell,n})$, $\ell=0,1,\ldots$ Similarly, the additive Gaussian noise $(Z_1,\ldots,Z_n)$ has the same distribution as $(\sqrt{\rho}Z_1^{(1)}+\sqrt{1-\rho}Z_1^{(2)},\ldots,\sqrt{\rho}Z_n^{(1)})+\sqrt{1-\rho}Z_n^{(2)})]$, where $Z_1^{(1)},\ldots,Z_n^{(1)},Z_1^{(2)},\ldots,Z_n^{(2)}$ are i.i.d., circularly-symmetric, complex Gaussian random variable of zero-mean and variance $\sigma^2$. It follows by comparing (\ref{barY_k}) with (\ref{ChModel}) that
\begin{eqnarray}\label{barY_knew}
\bar{Y}_1^n\eqlaw \sqrt{\rho}Y_1^n+\sqrt{1-\rho}{Z_1^n}^{(2)}
\end{eqnarray}
where we use $\eqlaw$ to denote equivalence in the probability law. With this, we can compute the entropy of the second term in (\ref{MInf1}) as 
\begin{eqnarray}\label{h(barY_1^n)}
h(\bar{Y}_1^n)&=& h\left(\sqrt{\rho}Y_1^n+\sqrt{1-\rho}{Z_1^n}^{(2)}\right)\nonumber\\
&\geq& h\left(\sqrt{\rho}Y_1^n+\sqrt{1-\rho}{Z_1^n}^{(2)}\big|{Z_1^n}^{(2)}\right)\nonumber\\
&=& n\log\rho+h(Y_1^n)
\end{eqnarray}
where the second step follows from the behavior of differential entropy under translation \cite[Th.8.6.3]{Cover} and scaling \cite[Th.8.6.4]{Cover}, and because conditioning reduces entropy; the last step follows because $Y_1^n$ is independent of ${Z_1^n}^{(2)}$.

Replacing (\ref{h(barY_1^n)}) in (\ref{MInf1}), we obtain

\begin{eqnarray}\label{MInfRho}
\frac{1}{n}I(X_1^n;Y_1^n)&\leq&\frac{1}{n}\left( h(Y_1^n)-n\log\rho-h(Y_1^n)\right)\nonumber\\
&=& \log\frac{1}{\rho}.
\end{eqnarray}
This proves (\ref{CRho}).

\section{Proof of Theorem~\ref{MainTheorem}}\label{Proof}
To prove Theorem~\ref{MainTheorem}, we derive an upper bound on the channel capacity that is independent of the SNR. We begin with the same steps as in Section~\ref{ExponentialPL} to obtain (\ref{MInf1}). Thus, we derive an upper bound on the mutual information $I(X_1^n;Y_1^n)$ as follows
\begin{IEEEeqnarray}{rCl}\label{MInf}
I({X}_{1}^n;Y_1^n)&=& h(Y_1^n)-h(Y_1^n|X_1^n)\nonumber\\
&\leq& h(Y_1^{n})-h(Y_1^n|H_1^n,X_{1}^{n})\nonumber \\
&=& h(Y_1^n)-h(Y_1^n-{H}_{1}^n{X}_{1}^n). \IEEEeqnarraynumspace
\end{IEEEeqnarray}
For the first entropy on the right-hand side (RHS) of (\ref{MInf}) we rewrite the channel model in (\ref{ChModel}) as 
\begin{equation}\label{Y_k}
Y_k=\sum_{\ell=0}^{\infty}{H}_{\ell,k}{X}_{\ell,k} + Z_k,\quad k \in \mathbb{Z}
\end{equation}
where we define $H_{0,k}\triangleq H_k$. For the second entropy in (\ref{MInf}) we define again a new channel model, as we have done in the last section. Specifically, the time-$k$ channel output  $\tilde{Y}_k$ of the new channel corresponding to the time-$k$ channel inputs ${X}_{\ell,1},\ldots,{X}_{\ell,k}$ is given by
\begin{eqnarray}\label{^Y_k}
\tilde{Y}_k=\sum_{\ell=0}^{\infty}\tilde{H}_{\ell+1,k}{X}_{\ell,k}+\tilde{Z}_k,\quad k \in \mathbb{Z}  
\end{eqnarray}
where the fading coefficients $\{\tilde{H}_{\ell,k}\}$  have the same distribution as $\{{H}_{\ell,k}\}$ but are independent of $\{H_{\ell,k}\}$ and, likewise, the additive-noise terms $\{\tilde{Z}_k\}$ have the same distribution as $\{Z_k\}$ but are independent of $\{Z_k\}$. It follows that the sequences $\{\tilde{Z}_k\}$ and $\{\tilde{H}_{\ell,k}\}$ are i.i.d. and independent of each other, with  $\tilde{Z}_k$ $\sim\mathcal{N_{\mathbb{C}}}(0,{\sigma}^2)$ and $\tilde{H}_{\ell,k}\sim\mathcal{N_{\mathbb{C}}}(0,\alpha_\ell)$. Since $\{X_k\}$ and $\{X_{\ell,k}\}$, $\ell=1,2,\ldots$ have the same distribution, it follows that $\tilde{Y}_1^n$ has the same distribution as $Y_1^n-H_1^nX_1^n$, hence $h(Y_1^n-H_1^nX_1^n)=h(\tilde{Y}_1^n)$.

To find an upper bound on (\ref{MInf}) we use the identity $h(A)-h(B)=h(A|B)-h(B|A)$ to obtain
\begin{IEEEeqnarray}{lCl}\label{MInf_2}
h(Y_1^n)-h(\tilde{Y}_1^n)&=& h(Y_1^n|\tilde{Y}_1^n)-h(\tilde{Y}_1^n|Y_1^n) \nonumber \\
&\leq&\sum_{k=1}^{n}h(Y_k|\tilde{Y}_k)-\sum_{k=1}^{n}h(\tilde{Y}_k|\tilde{Y}_1^{k-1},Y_1^n)\nonumber	\\
&=& \sum_{k=1}^{n}\Bigg( h\bigg(\tilde{Y_k}\frac{Y_k}{\tilde{Y_k}}\bigg|\tilde{Y}_k\bigg)-h(\tilde{Y_k}|\tilde{Y}_1^{k-1},Y_1^n)\Bigg)\nonumber\\
&\leq& \sum_{k=1}^{n}\Bigg( \E{\log|\tilde{Y}_k|^2}+{h\bigg(\frac{Y_k}{\tilde{Y}_k}\bigg)}-h(\tilde{Y_k}|\tilde{Y}_1^{k-1},Y_1^n)\Bigg)
\end{IEEEeqnarray}
where the second step follows from the chain rule for entropy and because conditioning reduces entropy; the third step follows because if we multiply and divide by $\tilde{Y_k}$ we do not change the conditional differential entropy; the last inequality follows from the behavior of differential entropy under scaling by a complex random variable and because conditioning reduces entropy.

In order to find an upper bound on $h(Y_1^n)-h(\tilde{Y}_1^n)$, we calculate bounds for each term in (\ref{MInf_2}). For the first term we have
\begin{IEEEeqnarray}{rCl}\label{UB_1}
\E{\log|\tilde{Y}_k|^2 }&=&\E{\E{\log\left(|\tilde{Y}_k|^2\right)\Big|\{{X}_{\ell,k}\}}}\nonumber\\
&\leq& \E{\log\left(\E{|\tilde{Y}_k|^2\Big|\{{X}_{\ell,k}\}}\right)}
\end{IEEEeqnarray}
where last step follows from Jensen's inequality.

For the third term on the RHS of (\ref{MInf_2}), we use that conditioning reduces entropy to obtain
\begin{IEEEeqnarray}{lCl}\label{UB_3}
h\left(\tilde{Y_k}|\tilde{Y}_1^{k-1},Y_1^n\right)&\geq& h\left(\tilde{Y_k}\Big|\tilde{Y}_1^{k-1},Y_1^n,\{{X}_{\ell,k}\}\right)\nonumber\\
&=& h\left(\tilde{Y_k}\Big|\{{X}_{\ell,k}\}\right)\nonumber \\
&=&\log(\pi e)+\E{\log\bigg(\E{|\tilde{Y}_k|^2\Big|\{{X}_{\ell,k}\}}\bigg)}
\end{IEEEeqnarray}
where the first equality follows because $\{\tilde{Z}_k\}$ and $\{\tilde{H}_{\ell,k}\}$ are i.i.d. and independent of $(\{H_{\ell,k}\}, \{Z_k\})$, so conditioned on $\{{X}_{\ell,k}\}$, $\tilde{Y}_k$ is independent of $\left(\tilde{Y}_1^{k-1}, Y_1^n\right)$; the second  equality follows because, conditioned on the inputs $\{{X}_{\ell,k}\}$, $\tilde{Y}_k$ has a Gaussian distribution.

Now we bound the second term in (\ref{MInf_2}). To this end, we first note that, conditioned on the inputs $\{X_{\ell,k}\}$, the ratio $\frac{Y_k}{\tilde{Y}_k} $ is a ratio of independent, circularly-symmetric complex random variables. This implies that $\frac{Y_k}{\tilde{Y}_k} $ is also circularly symmetric unconditioned on $\{X_{\ell,k}\}$. Indeed, express  $Y_k$ and $\tilde{Y}_k$ as
\begin{eqnarray}\label{Yk-Yk'}
Y_k&=& |Y_k|e^{j\phi_Y} \\
\tilde{Y}_k&=& |\tilde{Y}_k|e^{j\phi_{\tilde{Y}}}
\end{eqnarray}
for some $\phi_Y$,$\phi_{\tilde{Y}} \in \left[0,2\pi\right)$. By circular symmetry, it follows that 
\begin{eqnarray}
\frac{Y_k}{\tilde{Y}_k}&=&\left|\frac{Y_k}{\tilde{Y}_k}\right|e^{j(\phi_Y-\phi_{\tilde{Y}})}
\end{eqnarray}
where, conditioned on $\{X_{\ell,k}\}$, $\phi_Y$ and $\phi_{\tilde{Y}}$ are uniformly distributed and independent of $\left|\frac{Y_k}{\tilde{Y}_k}\right|$. Since this is true irrespective of $\{X_{\ell,k}\}$, it follows that $\frac{Y_k}{\tilde{Y}_k}$ is circularly symmetric.

Once we have determined that the ratio $\frac{Y_k}{\tilde{Y}_k}$ is circularly symmetric, we can apply \cite[Lemma 6.16]{lapidoth2003capacity} to obtain

\begin{IEEEeqnarray} {lCl}\label{h(W)AS}
h\bigg(\frac{Y_k}{\tilde{Y}_k}\bigg)&=&\log2\pi+h\bigg(\bigg|\frac{Y_k}{\tilde{Y}_k}\bigg|\bigg)+\E{\log\bigg(\bigg|\frac{Y_k}{\tilde{Y}_k}\bigg|\bigg)}
\nonumber\\
&\leq& \log2\pi+h\bigg(\bigg|\frac{Y_k}{\tilde{Y}_k}\bigg|\bigg)+\log\left(\E{\bigg|\frac{Y_k}{\tilde{Y}_k}\bigg|}\right)
\end{IEEEeqnarray}
where the inequality follows from Jensen's inequality.

Let $R\triangleq\left|\frac{Y_k}{\tilde{Y}_k}\right|$. Conditioned on $\{X_{\ell,k}\}$, $R$ is the ratio of two independent Rayleigh distributed random variables, whose probability density function is given by \cite[p. 68, Eq. (7.44)]{Simon2006PDI1212190}
\begin{IEEEeqnarray}{lCl}\label{pdf}
f_{R|\{X_{\ell,k}\}}(r|\{x_{\ell,k}\})&=&\frac{2\E{|\tilde{Y}_k|^2\Big|\{X_{\ell,k}\}=\{x_{\ell,k}\}}\E{|{Y}_k|^2\Big|\{X_{\ell,k}\}=\{x_{\ell,k}\}}r}{\left(\E{|\tilde{Y}_k|^2\Big|\{X_{\ell,k}\}=\{x_{\ell,k}\}}r^2+\E{|{Y}_k|^2\Big|\{X_{\ell,k}\}=\{x_{\ell,k}\}}\right)^2},\nonumber\\
\end{IEEEeqnarray}
for $r>0$.

To compute the needed differential entropy of $R$, we need to marginalize over $\{X_{\ell,k}\}$. To this end, we define
\begin{eqnarray}\label{eta_def}
\eta^2(\{x_{\ell,k}\})\stackrel{\Delta}{=}\frac{\E{|Y_k|^2\Big|\{X_{\ell,k}\}=\{x_{\ell,k}\}}}{\E{|\tilde{Y}_k|^2\Big|\{X_{\ell,k}\}=\{x_{\ell,k}\}}}
\end{eqnarray}
which can be evaluated  as 
\begin{IEEEeqnarray}{lCl}\label{eta_eval}
\eta^2(\{x_{\ell,k}\})=\frac{\sum_{\ell=0}^{\infty}\alpha_{\ell}|{x}_{\ell,k}|^2+\sigma^2}{\sum_{\ell=0}^{\infty}\alpha_{\ell+1}|{x}_{\ell,k}|^2+\sigma^2}.
\end{IEEEeqnarray}
With this, we obtain
\begin{eqnarray}\label{f_R}
f_R(r)=2r\E{\frac{\eta^2(\{X_{\ell,k}\})}{\left(r^2+\eta^2(\{X_{\ell,k}\})\right)^2}}
\end{eqnarray}
where the expectation is over $\{X_{\ell,k}\}$. This yields for the differential entropy of $R$
\begin{IEEEeqnarray}{lCl}\label{ratio_1}
h\bigg(\bigg|\frac{Y_k}{\tilde{Y}_k}\bigg|\bigg)&\stackrel{\Delta}{=}&-\int_{0}^{\infty}f_R(r)\log f_R(r)dr\nonumber\\
&=&{}-
\int_{0}^{\infty}2r\E{\frac{\eta^2{(\{X_{\ell,k}\})}}{\big(r^2+\eta^2{(\{X_{\ell,k}\})}\big)^2}}\log \left(2r\E{\frac{\eta^2{(\{X_{\ell,k}\})}}{\big(r^2+\eta^2{(\{X_{\ell,k}\})}\big)^2}}\right)dr\nonumber\\
&=&{}- \int_{0}^{\infty}2r\E{\frac{\eta^2{(\{X_{\ell,k}\})}}{\big(r^2+\eta^2{(\{X_{\ell,k}\})}\big)^2}}\log (2r) dr\nonumber\\
&&\quad{}- \int_{0}^{\infty}2r\E{\frac{\eta^2{(\{X_{\ell,k}\})}}{\big(r^2+\eta^2{(\{X_{\ell,k}\})}\big)^2}}\log \left(\E{\frac{\eta^2{(\{X_{\ell,k}\})}}{\big(r^2+\eta^2{(\{X_{\ell,k}\})}\big)^2}}\right) dr.
\end{IEEEeqnarray}
Computing the first integral yields
\begin{IEEEeqnarray}{lCl}\label{Integral1}
- \int_{0}^{\infty}2r\E{\frac{\eta^2{(\{X_{\ell,k}\})}}{\big(r^2+\eta^2{(\{X_{\ell,k}\})}\big)^2}}\log (2r) dr
&=& -\E{\int_{0}^{\infty}2r\frac{\eta^2{(\{X_{\ell,k}\})}}{\big(r^2+\eta^2{(\{X_{\ell,k}\})}\big)^2}\log(2r)dr}\nonumber\\
&=&
-\log(2)-\E{\int_{0}^{\infty}2r\frac{\eta^2{(\{X_{\ell,k}\})}}{\big(r^2+\eta^2{(\{X_{\ell,k}\})}\big)^2}\log(r)dr}\nonumber\\
&=&-\log(2) -\frac{1}{2}\E{\log (\eta^2{(\{X_{\ell,k}\})})}
\end{IEEEeqnarray}
where the first step follows by Fubini's theorem \cite[p. 108, Th. 2.6.6]{ash2000probability}, the last step follows from computing the integral using the change of variable $t= r^2$ and \cite[p. 535, Sec. 4.23, Eq. (5)]{Gradshteyn2007}.

To bound the second integral in (\ref{ratio_1}), we require bounds on $\eta^2{(\{x_{\ell,k}\})}$, which we shall present in the following lemma.

\begin{lemma} \label{Lemma}
Assume that $\{\alpha_{\ell}\}$ satisfies (\ref{EtaBounds_2}), namely, $\frac{\alpha_{\ell+1}}{\alpha_{\ell}}\geq\rho$, $\ell=1,2,\ldots$ Then, $\eta^2{(\{x_{\ell,k}\})}$ is bounded by
\begin{eqnarray}
\eta_{\min}^2\leq\eta^2\left(\{x_{\ell,k}\}\right)\leq\eta_{\max}^2
\end{eqnarray}
where
\begin{eqnarray}
\eta^2_{\min}&\stackrel{\Delta}{=}&\min\left(1,\frac{\alpha_0}{\alpha_1}\right)\\
\eta^2_{\max}&\stackrel{\Delta}{=}&\max\left(\frac{\alpha_0}{\alpha_1},\frac{1}{\rho}\right).
\end{eqnarray}
\end{lemma}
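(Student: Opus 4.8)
The plan is to exploit the explicit expression \eqref{eta_eval}, which writes $\eta^2(\{x_{\ell,k}\})$ as a ratio of two nonnegative sums built from the same ingredients $|x_{\ell,k}|^2$ and $\sigma^2$. Writing $a_\ell\triangleq|x_{\ell,k}|^2\geq 0$, the numerator weights each $a_\ell$ by $\alpha_\ell$ while the denominator weights it by $\alpha_{\ell+1}$, and both carry the same additive $\sigma^2$. The key observation is that such a ratio is sandwiched between the smallest and the largest of the term-by-term ratios, so it suffices to bound each coefficient ratio $\alpha_\ell/\alpha_{\ell+1}$ together with the ratio $\sigma^2/\sigma^2=1$ contributed by the noise term.

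Concretely, for the upper bound I would verify the term-wise inequalities $\alpha_\ell a_\ell\leq\eta_{\max}^2\,\alpha_{\ell+1}a_\ell$ for every $\ell\geq 0$ as well as $\sigma^2\leq\eta_{\max}^2\,\sigma^2$, and then sum them to get numerator $\leq\eta_{\max}^2\times$ denominator, which is exactly $\eta^2\leq\eta_{\max}^2$. This reduces the whole claim to three elementary facts: (i) the noise term contributes the ratio $1$, which is $\leq\eta_{\max}^2$ because $\eta_{\max}^2\geq 1/\rho>1$; (ii) the $\ell=0$ term (the intended link, with $\alpha_0$ in the numerator and $\alpha_1$ in the denominator) contributes the ratio $\alpha_0/\alpha_1\leq\eta_{\max}^2$ by the very definition of $\eta_{\max}^2$; and (iii) for the interferers $\ell\geq 1$, the exponential-decay condition \eqref{EtaBounds_2} gives $\alpha_\ell/\alpha_{\ell+1}\leq 1/\rho\leq\eta_{\max}^2$. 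The lower bound is entirely symmetric: I would establish $\alpha_\ell a_\ell\geq\eta_{\min}^2\,\alpha_{\ell+1}a_\ell$ and $\sigma^2\geq\eta_{\min}^2\,\sigma^2$ term by term and sum, using that the noise ratio satisfies $1\geq\eta_{\min}^2$, that $\alpha_0/\alpha_1\geq\eta_{\min}^2$ by definition, and that for $\ell\geq 1$ the ordering assumption $\alpha_\ell\geq\alpha_{\ell+1}$ yields $\alpha_\ell/\alpha_{\ell+1}\geq 1\geq\eta_{\min}^2$.

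The one point that needs care---and the reason the constants $\eta_{\min}^2$ and $\eta_{\max}^2$ take the asymmetric form of a $\min$ and a $\max$ involving $\alpha_0/\alpha_1$---is that the ordering assumption $\alpha_\ell\geq\alpha_{\ell'}$ (for $\ell<\ell'$) only governs the interfering nodes $\ell,\ell'\geq 1$, not the intended link $\ell=0$. Hence the $\ell=0$ ratio $\alpha_0/\alpha_1$ has to be carried along explicitly and cannot be absorbed into the bound $1/\rho$ that controls the tail $\ell\geq 1$; this is precisely what forces $\eta_{\min}^2=\min(1,\alpha_0/\alpha_1)$ and $\eta_{\max}^2=\max(\alpha_0/\alpha_1,1/\rho)$. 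Proceeding term by term (rather than through a weighted-average or mediant argument) has the further advantage of sidestepping any question about convergence of the infinite sums, since each term-wise inequality holds irrespective of whether the series converge, and summing preserves the inequality in the extended reals.
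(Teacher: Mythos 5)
Your proof is correct, and it reaches the paper's bounds by a genuinely different, more elementary route. The paper argues coordinate by coordinate: fixing all inputs but one, it views $\eta^2$ as a M\"obius function $f(x)=\frac{\alpha_0 x+\beta_1}{\alpha_1 x+\beta_2}$ of the remaining coordinate $x=|x_{0,k}|^2$, uses the two-term fact $\min\left(\frac{\alpha_0}{\alpha_1},\frac{\beta_1}{\beta_2}\right)\leq f(x)\leq\max\left(\frac{\alpha_0}{\alpha_1},\frac{\beta_1}{\beta_2}\right)$, and then recurses into $\frac{\beta_1}{\beta_2}$, declaring that ``at the end'' the residual ratio is $\sigma^2/\sigma^2=1$; combined with $1\leq\frac{\alpha_k}{\alpha_{k+1}}\leq\frac{1}{\rho}$ for $k\geq 1$ (from the ordering of the interferers and \eqref{EtaBounds_2}) and the separately carried ratio $\frac{\alpha_0}{\alpha_1}$, this yields the asymmetric constants $\eta_{\min}^2$ and $\eta_{\max}^2$. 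Your termwise-domination argument---prove $\eta_{\min}^2\,\alpha_{\ell+1}a_{\ell}\leq\alpha_{\ell}a_{\ell}\leq\eta_{\max}^2\,\alpha_{\ell+1}a_{\ell}$ with $a_{\ell}=|x_{\ell,k}|^2$ for every $\ell\geq 0$, likewise for the $\sigma^2$ terms, then sum and divide by the denominator (strictly positive since $\sigma^2>0$)---establishes in one shot the generalized mediant inequality that the paper's recursion is implicitly invoking, and your identification of why $\alpha_0/\alpha_1$ must be carried explicitly (the ordering and decay hypotheses govern only $\ell\geq 1$) matches the paper's treatment exactly. What your route buys is rigor at precisely the point where the paper is informal: with infinitely many coordinates there is no ``end'' to the recursion, so the paper's iteration strictly needs a limiting argument to close, whereas your termwise inequalities sum directly and, as you note, remain valid without any convergence discussion. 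What the paper's route buys is the intermediate statement \eqref{EtaBounds_3} in terms of $\sup_k$ and $\inf_k$ of the coefficient ratios, but that follows just as immediately from your argument, so nothing essential is lost.
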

\begin{proof}
By \eqref{eta_eval}, we have
\begin{IEEEeqnarray}{lCr}\label{EtaBounds_1}
{\eta^2{(\{x_{\ell,k}\})}}={\frac{\sum_{\ell=0}^{\infty} \alpha_{\ell}|{x}_{\ell,k}|^2+\sigma^2}{\sum_{\ell=0}^{\infty} \alpha_{\ell+1}|{x}_{\ell,k}|^2+\sigma^2}}.
\end{IEEEeqnarray}
We next maximize and minimize 
\begin{equation}\label{gx}
\eta^2{(\{x_{\ell,k}\})}=\frac{\alpha_0|{x}_{0,k}|^2+\alpha_1|{x}_{1,k}|^2+\cdots+\sigma^2}{\alpha_1|{x}_{0,k}|^2+\alpha_2|{x}_{1,k}|^2+\cdots+\sigma^2}
\end{equation}
over $x_{0,k}, x_{1,k},\ldots$ To this end, we first fix $x_{1,k},x_{2,k},\ldots$ and optimize over $x_{0,k}$. Thus, we write $\eta^2{(\{x_{\ell,k}\})}$ as
\begin{equation}\label{fx}
f(x)\stackrel{\Delta}{=}\frac{\alpha_0 x+\beta_1}{\alpha_1 x+\beta_2} 
\end{equation}
where $x$ corresponds to $|{x}_{0,k}|^2$, and $\beta_1$ and $\beta_2$ are the remaining terms in the numerator and denominator in \eqref{gx}, respectively, i.e., $\beta_1=\alpha_1|x_{1,k}|^2+\cdots+\sigma^2$ and $\beta_2=\alpha_2|x_{1,k}|^2+\cdots+\sigma^2$.  It is easy to show that
\begin{eqnarray}
\min\left(\frac{\alpha_0}{\alpha_1},\frac{\beta_1}{\beta_2}\right)\leq f(x) \leq \max \left(\frac{\alpha_0}{\alpha_1},\frac{\beta_1}{\beta_2}\right).
\end{eqnarray}
We next express $\frac{\beta_1}{\beta_2}$ as
\begin{equation}\label{fx^}
\tilde{f}(x)=\frac{\alpha_1 x+\tilde{\beta}_1}{\alpha_2 x+\tilde{\beta}_2} 
\end{equation}
where $x$ corresponds to $|{x}_{1,k}|^2$, and $\tilde{\beta}_1$ and $\tilde{\beta}_2$ are the remaining terms in the numerator and denominator, respectively, i.e., $\tilde{\beta}_1=\alpha_2|{x}_{2,k}|^2+\ldots+\sigma^2$ and $\tilde{\beta}_2=\alpha_3|{x}_{2,k}|^2+\ldots+\sigma^2 $. 
Again,  $\min\left(\frac{\alpha_1}{\alpha_2},\frac{\tilde{\beta}_1}{\tilde{\beta}_2}\right)\leq \tilde{f}(x) \leq \max \left(\frac{\alpha_1}{\alpha_2},\frac{\tilde{\beta}_1}{\tilde{\beta}_2}\right)$, so
\begin{eqnarray}
\min\left(\frac{\alpha_0}{\alpha_1},\frac{\alpha_1}{\alpha_2},\frac{\tilde{\beta}_1}{\tilde{\beta}_2}\right)\leq f(x) \leq \max \left(\frac{\alpha_0}{\alpha_1},\frac{\alpha_1}{\alpha_2},\frac{\tilde{\beta}_1}{\tilde{\beta}_2}\right).
\end{eqnarray}
Repeating these steps and considering that at the end we have $\frac{\beta_1}{\beta_2}=\frac{\sigma}{\sigma}=1$, we obtain
\begin{IEEEeqnarray}{rCL}\label{EtaBounds_3}
\eta^2{(\{x_{\ell,k}\}}&\leq& \max\bigg(\sup_{k=0,1,\ldots}\bigg(\frac{\alpha_k}{\alpha_{k+1}}\bigg),1 \bigg)\stackrel{\Delta}{=}\eta_{\max}^2,\\
\eta^2{(\{x_{\ell,k}\})}&\geq& \min\bigg(1,\inf_{k=0,1,\ldots}\bigg(\frac{\alpha_{k}}{\alpha_{k+1}} \bigg)\bigg)\stackrel{\Delta}{=}\eta_{\min}^2.
\end{IEEEeqnarray}
Using the assumption that $\alpha_1\geq\alpha_2\geq\ldots$ and (\ref{EtaBounds_2}), we have
\begin{eqnarray}\label{a_k/a_k+1}
1\leq\frac{\alpha_k}{\alpha_{k+1}}\leq \frac{1}{\rho}, \qquad k=1,2,\ldots
\end{eqnarray}
from which we obtain
\begin{eqnarray}\label{eta_min/max}
\eta_{\min}^2&=&\min\left(1,\frac{\alpha_0}{\alpha_1}\right)\\
\eta_{\max}^2&=&\max\left(\frac{\alpha_0}{\alpha_1},\frac{1}{\rho}\right).
\end{eqnarray}
\end{proof}
By the continuity of $x\rightarrow \frac{x}{\left(r^2+x\right)^2}$, it follows that we can define
\begin{eqnarray}\label{Eta*}
\eta_{*}^2=\argmin_{\eta_{\min}^2\leq\eta^2\leq \eta_{\max}^2}\frac{\eta^2}{\left(r^2+\eta^2\right)^2}
\end{eqnarray}
which satisfies $\eta_{\min}^2\leq\eta_*^2\leq\eta_{\max}^2$.

We next apply (\ref{Eta*}) to compute an upper bound on the second integral in (\ref{ratio_1}). By definition, we have that
\begin{equation}\label{Ex>x}
\E{\left(\frac{\eta^2(\{X_{\ell,k}\})}{(r^2+\eta^2(\{X_{\ell,k}\}))^2}\right)}\geq\frac{\eta_*^2}{(r^2+\eta_*^2)^2}
\end{equation}
so by the monotonicity of the logarithm function
\begin{IEEEeqnarray}{lll}\label{Integral2}
- \int_{0}^{\infty}2r\E{\frac{\eta^2{(\{X_{\ell,k}\})}}{\big(r^2+\eta^2{(\{X_{\ell,k}\})}\big)^2}}\log \left(\E{\frac{\eta^2{(\{X_{\ell,k}\})}}{\big(r^2+\eta^2{(\{X_{\ell,k}\})}\big)^2}}\right) dr \nonumber\\
\qquad\qquad\leq -\int_{0}^{\infty}\E{\frac{2r\eta^2{(\{X_{\ell,k}\})}}{\left(r^2+\eta^2{(\{X_{\ell,k}\})}\right)^2}}\log\left(\frac{\eta_{*}^2}{\left(r^2+\eta_{*}^2\right)^2}\right)dr\nonumber\\
\qquad\qquad=-\E{\int_{0}^{\infty}\frac{2r\eta^2{(\{X_{\ell,k}\})}}{\left(r^2+\eta^2{(\{X_{\ell,k}\})}\right)^2}\log\left(\frac{\eta_{*}^2}{\left(r^2+\eta_{*}^2\right)^2}\right)dr}\nonumber\\
\qquad\qquad=-\log\left(\eta_{*}^2\right)
+\E{\int_{0}^{\infty}\frac{4r\eta^2{(\{X_{\ell,k}\})}}{\left(r^2+\eta^2{(\{X_{\ell,k}\})}\right)^2}\log\left(r^2+\eta_{*}^2\right)dr}\IEEEeqnarraynumspace
\end{IEEEeqnarray}
where the second step follows by Fubini's theorem.

To derive an upper bound on the second term in (\ref{Integral2}), we integrate by parts. To this end, we have to distinguish between the cases $\eta_{*}^2\geq\eta^2{(\{x_{\ell,k}\})}$ and $\eta_{*}^2<\eta^2{(\{x_{\ell,k}\})}$. The details are carried in Appendix \ref{Ap_Integralbyparts}, the results are as follows:

If $\eta_{*}^2\geq\eta^2{(\{x_{\ell,k}\})}$, then
\begin{IEEEeqnarray}{lCl}\label{eta*>eta}
\E{\int_{0}^{\infty}\frac{4r\eta^2{(\{X_{\ell,k}\})}\log\left(r^2+\eta_{*}^2\right)}{\left(r^2+\eta^2{(\{X_{\ell,k}\})}\right)^2}dr\mathds{1}\{\eta_{*}^2\geq\eta^2{(\{X_{\ell,k}\})}\} }&\leq&  \E{\left(2\log\left(\eta_{*}^2\right)+2\right)\mathds{1}\{\eta_{*}^2\geq\eta^2{(\{X_{\ell,k}\})}\}}\nonumber\\
&=&\left(2\log\left(\eta_{*}^2\right)+2\right)\mathsf{Pr}\left(\eta_{*}^2\geq\eta^2{(\{X_{\ell,k}\})}\right)
\end{IEEEeqnarray}
where the last step follows because $\eta_{*}^2$ is deterministic. Here $\mathds{1}\{\cdot\}$ denotes the indicator function.

If $\eta_{*}^2<\eta^2{(\{x_{\ell,k}\})}$, then
\begin{IEEEeqnarray}{lCl}\label{eta*<eta}
\E{\int_{0}^{\infty}\frac{4r\eta^2{(\{X_{\ell,k}\})}\log\left(r^2+\eta_{*}^2\right)}{\left(r^2+\eta^2{(\{X_{\ell,k}\})}\right)^2}dr\mathds{1}\{\eta_{*}^2<\eta^2{(\{X_{\ell,k}\})}\}}&\leq& \E{\left(2\log(\eta_{*}^2)+2\frac{\eta^2(\{X_{\ell,k}\})}{\eta_{*}^2}\right)\mathds{1}\{\eta_{*}^2<\eta^2{(\{X_{\ell,k}\})}\}}\nonumber\\
&\leq&\left(2\log(\eta_{*}^2)+2\frac{\eta_{\max}^2}{\eta_*^2}\right)\mathsf{Pr}\left(\eta_{*}^2<\eta^2{(\{X_{\ell,k}\})}\right) 
\end{IEEEeqnarray}
where the last step follows because $\eta^2{(\{x_{\ell,k}\})}\leq\eta^2_{\max}$ and because $\log(\eta_{*}^2)$ and $\eta_{\max}^2$ are deterministic.

Combining (\ref{Integral1}) and (\ref{Integral2})-(\ref{eta*<eta}) with (\ref{ratio_1}), we obtain
\begin{IEEEeqnarray}{lCl}\label{ratio_2}
h\left(\left|\frac{Y_k}{\tilde{Y}_k}\right|\right)&\leq& -\log(2)-\frac{1}{2}\E{\log(\eta^2{(\{X_{\ell,k}\})})}-\log(\eta_*^2)+\left(2\log\left(\eta_{*}^2\right)+2\right)\left(\mathsf{Pr}\left(\eta_{*}^2\geq\eta^2{(\{X_{\ell,k}\})}\right)\right)\nonumber\\
&&{}+\left(2\log(\eta_{*}^2)+2\frac{\eta_{\max}^2}{\eta_*^2}\right)\mathsf{Pr}\left(\eta_{*}^2<\eta^2{(\{X_{\ell,k}\})}\right)\nonumber\\  
&\leq&-\log(2)-\frac{1}{2}\E{\log(\eta^2(\{X_{\ell,k}\})}+\log\left(\eta_{*}^2\right)
+2\frac{\eta_{\max}^2}{\eta_{*}^2}\nonumber\\
&\leq&-\log(2)-\frac{1}{2}\log(\eta_{\min}^2)+\log\left(\eta_{*}^2\right)+2\frac{\eta_{\max}^2}{\eta_{*}^2}\IEEEeqnarraynumspace
\end{IEEEeqnarray}
where the second step follows by upper-bounding $1\leq\frac{\eta_{\max}^2}{\eta_{*}^2}$ and the last step follows by lower-bounding $\E{\log(\eta^2{(\{X_{\ell,k}\})})}\geq\log(\eta_{\min}^2)$.

Together with (\ref{h(W)AS}), this yields
\begin{IEEEeqnarray}{lCl}\label{UB_2}
h\bigg(\frac{Y_k}{\tilde{Y}_k}\bigg)
&\leq& \log2\pi+h\bigg(\bigg|\frac{Y_k}{\tilde{Y}_k}\bigg|\bigg)+{\log\Bigg(\E{\bigg|\frac{Y_k}{\tilde{Y}_k}\bigg|}\Bigg)}\nonumber\\
&\leq& \log(\pi)-\frac{1}{2}\log(\eta_{\min}^2)+\log\left(\eta_{*}^2\right)
+2\frac{\eta_{\max}^2}{\eta_{*}^2}+\log\Bigg(\E{\bigg|\frac{Y_k}{\tilde{Y}_k}\bigg|}\Bigg)\nonumber\\
&=&\log(\pi)-\frac{1}{2}\log(\eta_{\min}^2)+\log\left(\eta_{*}^2\right)
+2\frac{\eta_{\max}^2}{\eta_{*}^2}+{\log\left(\E{\frac{\pi}{2}\eta{(\{X_{\ell,k}\})}\right)}}\nonumber\\ 
&=& 2\log(\pi)-\log(2)-\frac{1}{2}\log(\eta_{\min}^2)+\log\left(\eta_{*}^2\right)+2\frac{\eta_{\max}^2}{\eta_{*}^2}+\frac{1}{2}\log\E{\eta^2(\{X_{\ell,k}\})}\nonumber\\
&\leq& 2\log(\pi)-\log(2)-\frac{1}{2}\log(\eta_{\min}^2)+\log\left(\eta_{*}^2\right)+2\frac{\eta_{\max}^2}{\eta_{*}^2}+\frac{1}{2}\log\left(\eta_{\max}^2\right)\nonumber\\
&=& 2\log(\pi)-\log(2)+\log\left(\eta_{*}^2\right)+\frac{1}{2}\log\left(\frac{\eta_{\max}^2}{\eta_{\min}^2}\right)
+2\frac{\eta_{\max}^2}{\eta_{*}^2}
\end{IEEEeqnarray}
where we have evaluated $\E{\left|\frac{Y_k}{\tilde{Y}_k}\right|}$ using \cite[p.69, Eq. (7.46)]{Simon2006PDI1212190} (see Appendix \ref{ExpectedValueRatio}) and we have upper-bounded $\E{\eta^2(\{X_{\ell,k}\})}\leq\eta^2_{\max}$.

Replacing (\ref{UB_1}), (\ref{UB_3}) and (\ref{UB_2}) in (\ref{MInf_2}), we finally obtain
\begin{IEEEeqnarray}{lCl}\label{MInf_4}
h(Y_1^n)-h(\tilde{Y}_1^n) 
&\leq& \sum_{k=1}^{n}\Bigg( \E{\log\bigg(\E{|\tilde{Y}_k|^2\Big|\{{X}_{\ell,k}\}}\bigg)}\Bigg.-\log(\pi e)-\E{\log\bigg(\E{|\tilde{Y}_k|^2\Big|\{{X}_{\ell,k}\}}\bigg)}
\nonumber \\
&&\qquad\qquad\qquad\qquad\qquad\quad{}\Bigg.+ 2\log(\pi)-\log(2)+\log\left(\eta_{*}^2\right)+\frac{1}{2}\log\left(\frac{\eta_{\max}^2}{\eta_{\min}^2}\right)
+2\frac{\eta_{\max}^2}{\eta_{*}^2}\Bigg)\nonumber\\
&=& \sum_{k=1}^{n}\bigg(\log(\pi)-1-\log(2)+\log\left(\eta_{*}^2\right)+\frac{1}{2}\log\left(\frac{\eta_{\max}^2}{\eta_{\min}^2}\right)+2\frac{\eta_{\max}^2}{\eta_{*}^2}\bigg)\nonumber\\
&=& n\mathsf{K}
\end{IEEEeqnarray}
where
\begin{IEEEeqnarray}{lCl}\label{K}
\mathsf{K}\stackrel{\Delta}{=}\log(\pi)-1-\log(2)+\log\left(\eta_{*}^2\right)
+\frac{1}{2}\log\left(\frac{\eta_{\max}^2}{\eta_{\min}^2}\right)
+2\frac{\eta_{\max}^2}{\eta_{*}^2}.
\end{IEEEeqnarray}
Back to the mutual information (\ref{MInf}), we have
\begin{IEEEeqnarray}{rCl}
\frac{1}{n}I({X}_{1}^n;Y_1^n)\leq \mathsf{K}
\end{IEEEeqnarray}
hence the capacity is upper-bounded by
\begin{equation}\label{BoundedC}
C(\SNR)\leq\mathsf{K}.
\end{equation}
This proves Theorem~\ref{MainTheorem}.

\section{Conclusions}
We analyzed the channel capacity of a noncoherent wireless network. We demonstrated that the channel capacity is bounded in the SNR, provided that the nodes do not cooperate and the path gains $\{\alpha_{\ell}\}$ decay exponentially or more slowly.
This confirms the observation made by Lozano et al.\ \cite{lozano_fundamental_2013} that in the absence of perfect CSI the capacity of wireless networks is bounded in the SNR. Our analysis is more general than the analysis in \cite{lozano_fundamental_2013} in the sense that we allow that the distribution of the channel inputs may depend on the SNR, but it is less general than the analysis in  \cite{lozano_fundamental_2013} in the sense that we assumed that the variances of the fading coefficients decay at most exponentially and that the nodes do not cooperate.

The behavior of the channel capacity observed in this paper is similar to the one observed for frequency-selective fading channels by Koch and Lapidoth in \cite{koch_multipath_2010}. Specifically, Koch and Lapidoth showed that if the path gains $\{\alpha_{\ell}\}$ decay exponentially or more slowly, then the capacity is bounded in the SNR; conversely, if the path gains decay faster than exponentially, then the capacity is unbounded in the SNR. The question then arises as to whether the channel capacity of noncoherent wireless networks is also unbounded in the SNR if we assume that the variances of the fading coefficients decay faster than exponentially. This will be the subject of future studies. 

\appendices
\section{Upper bound on (\ref{Integral2})}\label{Ap_Integralbyparts}
We solve the  second integral in (\ref{Integral2}) divided by $2\eta^2(\{x_{\ell,k}\})$, namely,
\begin{equation}\label{Integral2a}
\int_{0}^{\infty}\frac{2r}{(r^2+\eta^2{(\{x_{\ell,k}\})})^2}\log\left(r^2+\eta_*^2\right)dr.
\end{equation}
In the following, we first apply the change of variable $t= r^2$ to obtain
\begin{eqnarray}\label{Intparts_f(t)}
\int_{0}^{\infty}\frac{\log\left(t+\eta_{*}^2\right)}{\left(t+\eta^2{(\{x_{\ell,k}\})}\right)^2}dt.
\end{eqnarray}
We then integrate by parts. To this end, we distinguish between the two cases $\eta_*^2\geq\eta^2(\{x_{\ell,k}\})$ and  $\eta_*^2<\eta^2(\{x_{\ell,k}\})$.

\subsection{$\eta_*^2\geq\eta^2(\{x_{\ell,k}\})$}
We use another change of variable, namely, 
$x= t+\eta^2{(\{x_{\ell,k}\})}$, to obtain
\begin{IEEEeqnarray}{lCl}\label{Intparts_f(x)_c1}
\int_{0}^{\infty}\frac{\log\left(t+\eta_{*}^2\right)}{\left(t+\eta^2{(\{x_{\ell,k}\})}\right)^2}dt=\int_{\eta^2{(\{x_{\ell,k}\})}}^{\infty}\frac{\log\left(x+\eta_*^2-\eta^2{(\{x_{\ell,k}\})}\right)}{x^2}dx. \IEEEeqnarraynumspace
\end{IEEEeqnarray}
Introducing the auxiliary variable $\gamma=\eta_{*}^2-\eta^2{(\{X_{\ell,k}\})}$, this can be written as
\begin{IEEEeqnarray}{lCl}\label{Intparts_f(x)_c11}
\int_{\eta^2{(\{x_{\ell,k}\})}}^{\infty}\frac{\log\left(x+\eta_*^2-\eta^2{(\{x_{\ell,k}\})}\right)}{x^2}dx=\int_{\eta^2{(\{x_{\ell,k}\})}}^{\infty}\frac{1}{x^2}\log\left(x+\gamma\right)dx
\end{IEEEeqnarray}
which we now integrate by parts:
\begin{IEEEeqnarray}{lCl}\label{Integral2a1}
\int_{\eta^2{(\{x_{\ell,k}\})}}^{\infty}\frac{1}{x^2}\log\left(x+\gamma\right)dx&=&-\frac{1}{x}\log(x+\gamma)\bigg|_{\eta^2{(\{x_{\ell,k}\})}}^{\infty}+\int_{\eta^2{(\{x_{\ell,k}\})}}^{\infty}\frac{1}{x(x+\gamma)}dx\nonumber\\
&=& \frac{\log\left(\eta^2{(\{x_{\ell,k}\})}+\gamma\right)}{\eta^2{(\{x_{\ell,k}\})}}+\frac{1}{\gamma}\log\left(1+\frac{\gamma}{\eta^2{(\{x_{\ell,k}\})}}\right)\nonumber\\
&\leq&  \frac{\log\left(\eta^2{(\{x_{\ell,k}\})}+\gamma\right)}{\eta^2{(\{x_{\ell,k}\})}}+\frac{1}{\eta^2{(\{x_{\ell,k}\})}}\nonumber \\
&=& \frac{1}{\eta^2({\{x_{\ell,k}\}})}\left(\log\left(\eta_*^2\right)+1\right)
\end{IEEEeqnarray}
where the second step follows from \cite[p. 70, Sec 2.118, Eq. (1)]{Gradshteyn2007}, the third step follows from $\log(1+b)\leq b$ being $b=\frac{\gamma}{\eta_{(\{X_{\ell,k}\})}^2}$ and the last step follows by replacing $\eta^2{\{x_{\ell,k}\}}+\gamma=\eta_*^2$. Multiplying (\ref{Integral2a1}) by $2\eta^2(\{x_{\ell,k}\})$ and averaging over all values of $\{x_{\ell,k}\}$ yields (\ref{eta*>eta}).

\subsection{$\eta_*^2<\eta^2{(\{x_{\ell,k}\})}$}
We use again a change of variable, namely, $x=t+\eta_*^2$, to express the integral as
\begin{IEEEeqnarray}{lCl}\label{Intparts_f(x)_c2}
\int_{0}^{\infty}\frac{\log\left(t+\eta_{*}^2\right)}{\left(t+\eta^2{(\{x_{\ell,k}\})}\right)^2}dt=\int_{\eta_*^2}^{\infty}\frac{\log\left(x\right)}{\left(x+\eta^2{(\{x_{\ell,k}\})}-\eta_*^2\right)^2}dx.
\end{IEEEeqnarray}
Introducing the auxiliary variable $\gamma=\eta^2{(\{x_{\ell,k}\})}-\eta_*^2$ and integrating by parts yields
\begin{IEEEeqnarray}{lCl}\label{Integral2a2}
\int_{\eta_{*}^2}^{\infty}\frac{\log(x)}{(x+\gamma)^2}dx&=& -\frac{\log(x)}{(x+\gamma)}\bigg|_{\eta_*^2}^{\infty}+\int_{\eta_*^2}^{\infty}\frac{1}{x(x+\gamma)}dx\nonumber\\
&=& \frac{\log\left(\eta_*^2\right)}{(\eta_*^2+\gamma)}+\frac{1}{\gamma}\log\left(1+\frac{\gamma}{\eta_*^2}\right)\nonumber\\
&\leq&  \frac{1}{(\eta_*^2+\gamma)}\log\left(\eta_*^2\right)+\frac{1}{\eta_*^2}\nonumber \\
&=& \frac{1}{\eta^2{(\{x_{\ell,k}\})}}\left(\log\left(\eta_*^2\right)+\frac{\eta^2{(\{x_{\ell,k}\})}}{\eta_*^2}\right)
\end{IEEEeqnarray}
where the second step follows from \cite[p. 70, Sec. 2.118, Eq. (1)]{Gradshteyn2007}, the third step follows from $\log(1+b)\leq b$ being $b=\frac{\gamma}{\eta_{*}^2}$, and the last step follows by replacing $\eta^2{\{x_{\ell,k}\}}-\gamma=\eta_*^2$. Multiplying (\ref{Integral2a2}) by $2\eta^2(\{x_{\ell,k}\})$ and averaging over all values of $\{x_{\ell,k}\}$ yields (\ref{eta*<eta}). 
 
\section{Expected value of the ratio $\left|\frac{Y_k}{\tilde{Y}_k}\right|$}\label{ExpectedValueRatio}
In this section we calculate the expected value of the ratio $R=\left|\frac{Y_k}{\tilde{Y}_k}\right|$. Using the probability density function in (\ref{f_R}), we have
\begin{IEEEeqnarray}{lCl}\label{E{r}}
\E{R}&=&\int_{0}^{\infty}2r^2\E{\frac{\eta^2(\{x_{\ell,k}\})}{\left(r^2+\eta^2(\{x_{\ell,k}\})\right)^2}}dr\nonumber\\
&=&\E{\int_{0}^{\infty}\frac{2r^2\eta^2(\{x_{\ell,k}\})}{\left(r^2+\eta^2(\{x_{\ell,k}\})\right)^2}dr}\nonumber\\
&=& \E{2\eta^2(\{x_{\ell,k}\})\int_{0}^{\infty}\frac{r^2}{\left(r^2+\eta^2(\{x_{\ell,k}\})\right)^2}dr}\nonumber\\
\end{IEEEeqnarray}
where the second step follows by applying Fubini's theorem. Computing the integral yields
\begin{IEEEeqnarray}{lCl}\label{intGamma}
\int_{0}^{\infty}\frac{r^2}{\left(r^2+\eta^2(\{x_{\ell,k}\})\right)^2}dr&=&\frac{1}{2\eta^4\left(\{x_{\ell,k}\}\right)}\left({\eta^3\left(\{x_{\ell,k}\}\right)}\right)\frac{\Gamma\left(\frac{3}{2}\right)\Gamma\left(\frac{1}{2}\right)}{\Gamma\left(2\right)}\nonumber\\
&=&\frac{1}{2\eta\left(\{x_{\ell,k}\}\right)}\frac{\pi}{2},
\end{IEEEeqnarray}

where the first step follows from \cite[p. 322, Sec 3.241, Eq. (4)]{Gradshteyn2007}. Combining (\ref{intGamma}) with (\ref{E{r}}), we obtain the expected value of $\left|\frac{Y_k}{\tilde{Y}_k}\right|$:
\begin{eqnarray}
\E{\left|\frac{Y_k}{\tilde{Y}_k}\right|}=\E{R}=\frac{\pi}{2}\E{\eta\left(\{x_{\ell,k}\}\right)}.
\end{eqnarray}

\bibliographystyle{IEEEtran}


\end{document}